\newtheorem{theorem}{\bf{Theorem}}[section]
\newtheorem{lem}[theorem]{Lemma}
\theoremstyle{plain}
\newenvironment{definition}[1][Definition]{\begin{trivlist}
\item[\hskip \labelsep {\bfseries #1}]}{\end{trivlist}}
\newcounter{assump} 
\newcounter{rem} 
\newenvironment{remark}[1][Remark \arabic{rem}]{\refstepcounter{rem} \begin{trivlist} 
\item[\hskip \labelsep {\bfseries #1}]}{\end{trivlist}}
\newcounter{exm} 
\newenvironment{example}[1][Example \arabic{exm}]{\refstepcounter{exm} \begin{trivlist} 
\item[\hskip \labelsep {\bfseries #1}]}{\end{trivlist}}
\newcounter{propno} 
\newcounter{algno} 
\begin{document}
\title{
Absorption in Time-Varying Markov Chains: Graph-Based Conditions
}
\author{Yasin Yaz{\i}c{\i}o\u{g}lu, \IEEEmembership{Member, IEEE}
\thanks{Yasin~Yaz{\i}c{\i}o\u{g}lu is with the Department of Electrical and Computer Engineering at the University of Minnesota, Minneapolis, MN, USA (e-mail: ayasin@umn.edu).} 
}


\maketitle
\begin{abstract}
We investigate absorption, i.e., almost sure convergence to an absorbing state, in time-varying (non-homogeneous) discrete-time Markov chains with finite state space. We consider systems that can switch among a finite set of transition matrices, which we call the modes. Our analysis is focused on two properties: 1) almost sure convergence to an absorbing state under any switching, and 2) almost sure convergence to a desired set of absorbing states via a proper switching policy. We derive necessary and sufficient conditions based on the structures of the transition graphs of modes. More specifically,  we show that a switching policy that ensures almost sure convergence to a desired set of absorbing states from any initial state exists if and only if those absorbing states are reachable from any state on the union of simplified transition graphs. We then show three sufficient conditions for absorption under arbitrary switching. While the first two conditions depend on the acyclicity (weak acyclicity) of the union (intersection) of simplified transition graphs, the third condition is based on the distances of each state to the absorbing states in all the modes. These graph theoretic conditions can verify the stability and stabilizability of absorbing states based only on the feasibility of transitions in each mode.

\end{abstract}
\begin{IEEEkeywords}
Markov processes, switched systems, stochastic systems
\end{IEEEkeywords}


\section{Introduction}

\IEEEPARstart{M}{any} natural and engineered systems  involve stochastic dynamics that can be modeled using the framework of Markov chains. Examples include social, biological, and financial systems; transportation, energy, sensor, and communication networks; and robotics.
For systems operating in dynamic environments, where the transition probabilities among states may change over time, the corresponding models become time-varying (non-homogeneous). Our focus in this paper will be on such time-varying models that can switch among a finite set of transition matrices, which we will call the modes of the system. Many studies have investigated the long-run behavior of time-varying Markov chains with ergodic modes (e.g., \cite{lim2013robustness,saloff2009merging,coppersmith2008conditions}). On the other hand, the long-run behavior of time-varying Markov chains with absorbing states is relatively under-explored. A state is called absorbing if it is impossible
to leave it. Markov chains with absorbing states appear in many areas such as optimization (e.g., \cite{shi2000nested}), game theory (e.g., \cite{Young04}), formal methods and verification (e.g., \cite{belta2017formal}), epidemiology (e.g., \cite{allen2008introduction}), motion planning and navigation (e.g., \cite{kurniawati2011motion}). For such systems, reaching an absorbing state corresponds to reaching a local optima (optimization), a Nash equilibrium (game theory), an accepting state of automaton (formal verification), an all-healthy state (epidemiology), or a desired position (motion planning). For time-invariant systems, quantities such as the probability of reaching a specific absorbing state, expected time to reach an absorbing state, or average time spent at each transient state can be easily computed based on the transition probabilities (e.g., \cite{van2009quasi,grinstead2012introduction}). However, such methods do not translate to time-varying Markov chains in general.





In this paper, we focus on time-varying Markov chains with absorbing states and we aim to address two questions: 1) Will the system almost surely reach an absorbing state from any initial condition under any switching among the modes? 2) Can the system be almost surely driven to a set of desired absorbing states by properly switching among the modes? Such stability and stabilizability questions regarding switching systems have been extensively studied in the control theory literature (e.g., \cite{Lin09stability,Ye1998stability,Hespanha99stability,Goebel09hybrid,Liberzon03switching}). There are also many studies on the mean-square stability of systems under exogenous noise and stochastic switching (e.g., \cite{krstic1998stabilization,costa2006discrete}). These works mainly investigate the algebraic conditions for stability of such switching systems. In this paper, we present a graph-theoretic analysis for time-varying discrete-time Markov chains with finite state space. The main contributions of this paper are as follows:



\begin{itemize}

\item  We show that a switching policy that ensures almost sure convergence to a desired set of absorbing states from any initial state exists if and only if that set of absorbing states is reachable from any state on the union of simplified transition graphs (Theorem \ref{conv2a}). 


\item  We show that almost sure convergence to an absorbing state from any initial condition under any switching is possible only when all the modes have the same set of absorbing states (Lemma \ref{conv-nec}) and we provide three sufficient conditions for such stability (Theorem \ref{conv1}): 1) the intersection of simplified graphs is weakly acyclic and has no sinks other than the absorbing states, or 2) the union of simplified transition graphs is acyclic, or 3) in every mode, each non-absorbing state $a_i$ has a feasible transition to some state $a_j$ whose maximum distance (among all simplified transition graphs) to the set of absorbing states is less than that of $a_i$'s. Each of these sufficient conditions can verify stability in some cases where the other two conditions are not satisfied.
\end{itemize}

The organization of this paper is as follows: Section \ref{prelim} provides some graph theory preliminaries. Section \ref{prob} presents the problem formulation. Section \ref{main} presents our main results. Finally, Section \ref{conc} concludes the paper.



\section{Preliminaries}
\label{prelim}




In this section, we present some graph preliminaries that will be used in our analysis. A directed graph $\mathcal{G}=(V,E)$ consists of a \emph{node set} $V$ and an \emph{edge set} $E \subseteq V \times V$. 
A \emph{path} is a sequence of nodes such that each node is adjacent to the preceding node in the sequence. The length of a path is equal to the number of edges traversed. For any two nodes $v$ and $v'$, the \emph{distance} $d(v,v')$ is the number of edges on a shortest path from $v$ and $v'$. We follow the convention that $d(v,v)=0$ and $d(v,v') =\infty$ if $v'$ is not reachable from $v$. Similarly, we define the distance of any node $v$ to a set of nodes $V' \subseteq V$ as the minimum distance between $v$ and the nodes in $V'$, i.e., $
d(v,V') = \min_{v'\in V'} d(v,v')$.

A \emph{sink} is a node with no outgoing edges. A directed graph ${\mathcal{G}=(V,E)}$ is \emph{acyclic} if there is no feasible path that starts and ends at the same node. It is
\emph{weakly acyclic} if there is a feasible path from any node to one of the sink nodes. 

For any two graphs, $\mathcal{G}=(V,E)$ and $\mathcal{G}'=(V',E')$, the union graph and the intersection graph are defined as
\begin{equation}
\label{uni}
\mathcal{G} \cup \mathcal{G}'= (V\cup V', E\cup E'),
\end{equation}
\begin{equation}
\label{inter}
\mathcal{G} \cap \mathcal{G}'= (V\cap V', E\cap E').
\end{equation}

\section{Problem Formulation}
\label{prob}


Consider a time-varying (non-homogeneous) Markov chain over a fine state space ${A=\{a_1,a_2,\hdots,a_n\}}$, where the transition matrix always belongs to a finite set ${\pi=\{P_1,P_2, \hdots, P_k\}}$. We refer to each $P_i \in \pi$ as a \emph{mode} of the system. For each ${i\in \{1,2,\hdots,n\}}$, let $x_i(t)\in [0,1]$ denote the probability of having $a(t)=a_i$. Accordingly, for any arbitrary initial state $a(0)=a_i$, $x(0)$ is obtained by setting $x_i(0)=1$ and $x_j(0)=0$ for all $j\neq i$. Starting with this initial condition,  $x(t)$ evolves under
\begin{equation}
    \label{mc}
  x^{\text{T}}(t+1)=x^{\text{T}}(t)P_{\sigma(t)},
\end{equation}
where $\sigma(t) \in \{1,2,\hdots,k\}$ is the \emph{switching signal} denoting which mode is active at time $t$. When the switching signal is defined as a function of state, we will refer to it as a \emph{switching policy} and, with a slight abuse of notation, we will denote it as $\sigma(a)$. For any $P_{\sigma(t)}\in \pi$, $[P_{\sigma(t)}]_{ij}$, which is the entry in the $i^{th}$ row and $j^{th}$ column of $P_{\sigma(t)}$, denotes the probability of having $a(t+1)=a_j$ given that $a(t)=a_i$. For each $P_i\in \pi$, $A_i^* \subseteq A$ denotes the set of absorbing states in mode $P_i$, i.e.,
\begin{equation}
    A^*_i= \{a_j \in A \mid [P_i]_{jj}=1\}.\end{equation}

We say that $P_i$ is an absorbing mode if it allows for a finite sequence of transitions from every state to some state in $A^*_i$. The set of all absorbing states and and the set of common absorbing states will be denoted as $A^*_\cup$ and $A^*_\cap$, i.e.,
  \begin{equation}
    \label{uabset}
A^*_\cup = A^*_1 \cup \hdots \cup A^*_k, 
\end{equation}
 \begin{equation}
    \label{cabset}
A^*_\cap = A^*_1 \cap \hdots \cap A^*_k. 
\end{equation} 

 

In this paper, we investigate the limiting behavior of $a(t)$ in such time-varying systems with absorbing states. In particular, we aim to address two questions: 
\begin{enumerate}
    \item Would $a(t)$ almost surely converge to an absorbing state in $A^*_\cup$ from any initial state $a(0) \in A$, 
    i.e., 
    \begin{equation}
    \label{asc}
\sum_{a^* \in A^*_\cup}\Pr \left [\lim_ {t \rightarrow \infty}  a(t) =a^* \right ] = 1, 
\end{equation}
under any  switching signal $\sigma(t)$?  
    
\item Given a set of desired absorbing states $A^*_{goal} \subseteq A^*_\cup$, is there a switching policy $\sigma: A \mapsto \{1,2, \hdots,k\}$ that guarantees almost sure convergence to an absorbing state in $A^*_{goal}$, i.e., 
    \begin{equation}
    \label{asc2}
\sum_{a^* \in A^*_{goal}}\Pr \left [\lim_ {t \rightarrow \infty}  a(t) =a^* \right ] = 1, 
\end{equation} from any initial state $a(0) \in A$?

\end{enumerate}

 While the first question is important when investigating the robustness of absorption to changes in the system (e.g., a dynamic environment determining the mode), the second question is important when the goal is to stabilize the system to a specific set of absorbing states via controlled switching. We will answer these questions based on the structures of \emph{simplified transition graphs} associated with $\pi$. 
 

\begin{definition} (Simplified Transition Graph) Given a transition matrix $P$ over the state space $A$, the simplified transition graph $\mathcal{G}=(A,E)$ is a simple directed graph that denotes all the feasible transitions other than the self-loops, i.e.,
\begin{equation}
    \label{eset}
E = \{(a_i,a_j) \mid a_i \neq a_j \in A, \; [P]_{ij} \neq 0 \}. 
\end{equation}
\end{definition}


For any set of transition matrices, $\pi=\{P_1, \hdots, P_k\}$, we will use $\mathcal{G}_1, \hdots, \mathcal{G}_k$ to denote the corresponding simplified transition graphs. Furthermore, we will use $\mathcal{G}_\cup$ and $\mathcal{G}_\cap$ to denote the intersection and union of those graphs, i.e.,
 \begin{equation}
    \label{ugraph}
\mathcal{G}_\cup = \mathcal{G}_1 \cup \hdots \cup \mathcal{G}_k, 
\end{equation}
 \begin{equation}
    \label{cgraph}
\mathcal{G}_\cap = \mathcal{G}_1 \cap \hdots \cap \mathcal{G}_k. 
\end{equation} 
Note that the simplified transition graphs are determined by the feasibility of transitions rather than the exact transition probabilities. Hence, when giving examples in our analysis, we will express the modes as structure matrices, e.g.,
 \begin{equation}
    \label{modex}
P=\begin{bmatrix} \times & \times & \times & 0 \\ 0 & 1 & 0 & 0 \\ 0 & \times & 0 & \times \\0 & 0 & 0 & 1 \end{bmatrix},
\end{equation}
 where $[P]_{ij}=\times$ denotes that the probability of transition from state $a_i$ to $a_j$ is in $(0,1)$ whereas the probabilities that are equal to zero or one are explicitly given in $P$. For all such examples, our discussions and analysis hold for any choice of transition matrices with the specified structures.

\section{Main Results}
\label{main}
In this section, we present our main results regarding the stability and stabilazibility of absorbing states. 

\subsection{Stabilazibility of Absorbing States}
In this subsection, we investigate the stabilizability of absorbing states. We first present a necessary and sufficient condition for the existence of a switching policy that ensures almost sure convergence to a desired set of absorbing states $A^*_{goal} \subseteq A^*_\cup$. In particular, we show that such a switching policy exists if and only if $A^*_{goal}$ is reachable from any state on the union of simplified transition graphs.


\begin{theorem}
\label{conv2a} Let $\pi=\{P_1, \hdots, P_k\}$  be transition matrices over a finite state space $A$ such that there is at least one absorbing state, $A^*_\cup \neq \emptyset$. For any $A^*_{goal} \subseteq A^*_\cup$, there exists a switching policy  $\sigma: A \mapsto \{1,2,\hdots,k\}$ that achieves almost sure convergence to an absorbing state in $A^*_{goal}$ from any initial state if and only if 
\begin{equation}
    \label{conv21}
d_\cup(a,A^*_{goal})<\infty, \forall a \in A, 
\end{equation}
where $d_\cup(a,A^*_{goal})$ denotes the distance of $a$ to $A^*_{goal}$ on the union of simplified transition graphs, $\mathcal{G}_\cup$.
\end{theorem}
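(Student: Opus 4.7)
The plan is to prove the two directions separately: necessity is a short reachability argument, while sufficiency requires constructing an explicit state-dependent policy and verifying almost sure absorption via a uniform positive-progress bound.

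For necessity, I would argue by contrapositive. Suppose some state $a_0$ satisfies $d_\cup(a_0, A^*_{goal}) = \infty$. For any switching signal (state-dependent or otherwise), a single transition from $a$ to $a'$ occurs with positive probability only when $[P_{\sigma(t)}]_{aa'} > 0$, which by the definition of the simplified transition graph means $(a,a')$ is an edge in $\mathcal{G}_{\sigma(t)}$ and hence in $\mathcal{G}_\cup$. An induction on $t$ shows that every state reachable from $a_0$ with positive probability is reachable from $a_0$ in $\mathcal{G}_\cup$. Since $A^*_{goal}$ is unreachable from $a_0$ in $\mathcal{G}_\cup$ by assumption, $\Pr[a(t) \in A^*_{goal} \mid a(0)=a_0] = 0$ for every $t$, which contradicts \eqref{asc2}.

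For sufficiency, suppose $d_\cup(a, A^*_{goal}) < \infty$ for all $a \in A$. I would construct a policy as follows: for each non-absorbing state $a$ with $d = d_\cup(a, A^*_{goal}) \geq 1$, the very definition of the union graph guarantees a mode $i$ and a successor $a'$ with $(a,a')$ an edge of $\mathcal{G}_i$ and $d_\cup(a', A^*_{goal}) = d-1$; set $\sigma(a) = i$. For $a \in A^*_{goal}$, set $\sigma(a)$ to any mode in which $a$ is absorbing, which exists because $A^*_{goal} \subseteq A^*_\cup$. The key step is then to establish almost sure absorption. Let $D = \max_{a \in A} d_\cup(a, A^*_{goal})$ and let $p_{\min}$ be the smallest positive entry among all $P_i \in \pi$. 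From any state $a$, iterating the construction produces an explicit chain $a = b_0 \to b_1 \to \cdots \to b_\ell \in A^*_{goal}$ of length $\ell \leq D$ along which the policy picks, at each $b_j$, a mode that assigns probability at least $p_{\min}$ to the intended successor $b_{j+1}$. Therefore from any initial state, the probability of hitting $A^*_{goal}$ within $D$ steps is at least $p_{\min}^D > 0$. Iterating this over consecutive length-$D$ blocks yields $\Pr[a(t) \notin A^*_{goal}\ \text{for all}\ t \leq nD] \leq (1 - p_{\min}^D)^n \to 0$, and since $A^*_{goal}$ consists of states that are absorbing under $\sigma$, this establishes \eqref{asc2}.

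The main obstacle I anticipate is the sufficiency step of translating a uniform per-block progress probability into almost sure convergence; this is routine once one observes that the existence of a decreasing-distance successor at every non-absorbing state is immediate from the union-graph definition, and the rest is the standard block-wise geometric tail bound.
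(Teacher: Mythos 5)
Your proposal is correct and follows essentially the same route as the paper's proof: necessity via the observation that every positive-probability transition is an edge of $\mathcal{G}_\cup$, and sufficiency via a policy that at each non-goal state selects a mode realizing a distance-decreasing edge of a shortest path in $\mathcal{G}_\cup$, followed by the standard block-wise geometric tail bound. Your write-up is if anything slightly more explicit than the paper's about the induction in the necessity direction and the $p_{\min}^D$ bound, but there is no substantive difference in approach.
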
 
\begin{proof} ($\Rightarrow$:) Suppose that there exists a switching policy ${\sigma: A \mapsto \{1,2,\hdots,k\}}$ that guarantees almost sure convergence to some $a^* \in A^*_{goal}$ from any $a(0)\in A$, but there exists $a' \in A$ such that $d_\cup(a',A^*_{goal})= \infty$, i.e., there is no feasible path from $a'$ to $A^*_{goal}$ on $\mathcal{G}_\cup $. Note that any feasible transition the system can make between two distinct states in any mode is contained as an edge in $\mathcal{G}_\cup$. Hence, if $d_\cup(a',A^*_{goal})= \infty$, then starting from the initial condition $a(0)= a'$ there is no feasible trajectory to $A^*_{goal}$, which contradicts with the existence of a switching policy achieving almost sure convergence to $A^*_{goal}$.

($\Leftarrow$:) Let \eqref{conv21} hold.  Consider a switching policy ${\sigma: A \mapsto \{1,2,\hdots,k\}}$ that maps 1) each state $a_i \notin A^*_{goal}$ to a mode that allows for a transition to a state closer to $A^*_{goal}$ and 2) each state $a_i \in A^*_{goal}$ to any of the modes in $\{1,\hdots,k\}$. More specifically, let 
\begin{equation}
\label{closer}
C(a_i) = \{a_j \in A \mid  d_\cup(a_i,A^*_{goal}) > d_\cup(a_j,A^*_{goal})\}. 
\end{equation} 
Accordingly, the proposed switching policy is 
\begin{equation}
\label{com-pot}
\sigma(a_i) \in \left\{\begin{array}{ll} \{s \mid \exists a_j \in C(a_i) \mbox{ s.t. } [P_s]_{ij}>0 \}, &  \forall a_i \notin A^*_{goal}, \\ \{s \mid a_i \in A_s^*\},&\forall a_i \in A^*_{goal},\end{array}\right.
\end{equation} 
where $\sigma(a_i)$ can be any arbitrary element from those sets. Note that $\{s \mid \exists a_j \in C(a_i) \mbox{ s.t. } [P_s]_{ij}>0 \} \neq \emptyset$ for any $a_i \notin A^*_{goal}$ as otherwise $A^*_{goal}$ is not reachable from $a_i$ on the union graph and \eqref{conv21} would be violated. Also, ${\{s \mid a_i \in A_s^*\} \neq \emptyset}$ for any  $a_i \in A^*_{goal}$ since $A^*_{goal} \subseteq A^*_\cup$.
 Note that \eqref{conv21} also implies that the maximum distance of any state to $A^*_{goal}$ on ${\mathcal{G}_\cup}$ is finite, i.e.,
\begin{equation}
\label{com-pot2}
m= \max_{a \in A} d_\cup(a,A^*) < \infty.
\end{equation} 
Accordingly, there is always a non-zero probability that the system will reach an absorbing state in $A^*_{goal}$ within the next $m$ time steps by constantly moving closer to $A^*_{goal}$ under the proposed switching policy. Repeating this over intervals of $m$-steps, we can show that the probability of this event (reaching $A^*_{goal}$ within the next $m$ time steps) never happening converges to zero as time goes to infinity. 
 \end{proof}

Theorem \ref{conv2a} provides an exact characterization of the stabilizability of any desired set of absorbing states $A^*_{goal} \subseteq A^*_\cup$. One trivial example is when all the modes have the same absorbing states, i.e., $A^*_\cup=A^*_\cap$, and one of the modes has a weakly acyclic simple transition graph. In that case, constantly staying in such a weakly acyclic mode would ensure almost sure convergence to the set of absorbing states. There are also more complicated cases where none of the modes can ensure almost sure convergence to absorbing states from every initial condition whereas a properly designed switching among the modes can achieve that. We provide such an example below.

\begin{example}
Consider a system with two modes:
\begin{equation}
    \label{modex1}
P_1=\begin{bmatrix} 0 & \times & \times & 0 \\ 0 & 1 & 0 & 0 \\ 0 & \times & 0 & \times \\0 & 0 & 0 & 1 \end{bmatrix}, \; P_2=\begin{bmatrix} 0 & \times & \times & 0 \\ 0 & 1 & 0 & 0 \\ 0 & 0 & 0 & 1 \\0 & 0 & \times & \times \end{bmatrix}.
\end{equation}
Accordingly, the simplified transition graphs $\mathcal{G}_1$ and $\mathcal{G}_2$, and their union are as follows:
\vskip2ex
\begin{center} \includegraphics[trim =0mm 0mm 0mm 0mm, clip,scale=0.4]{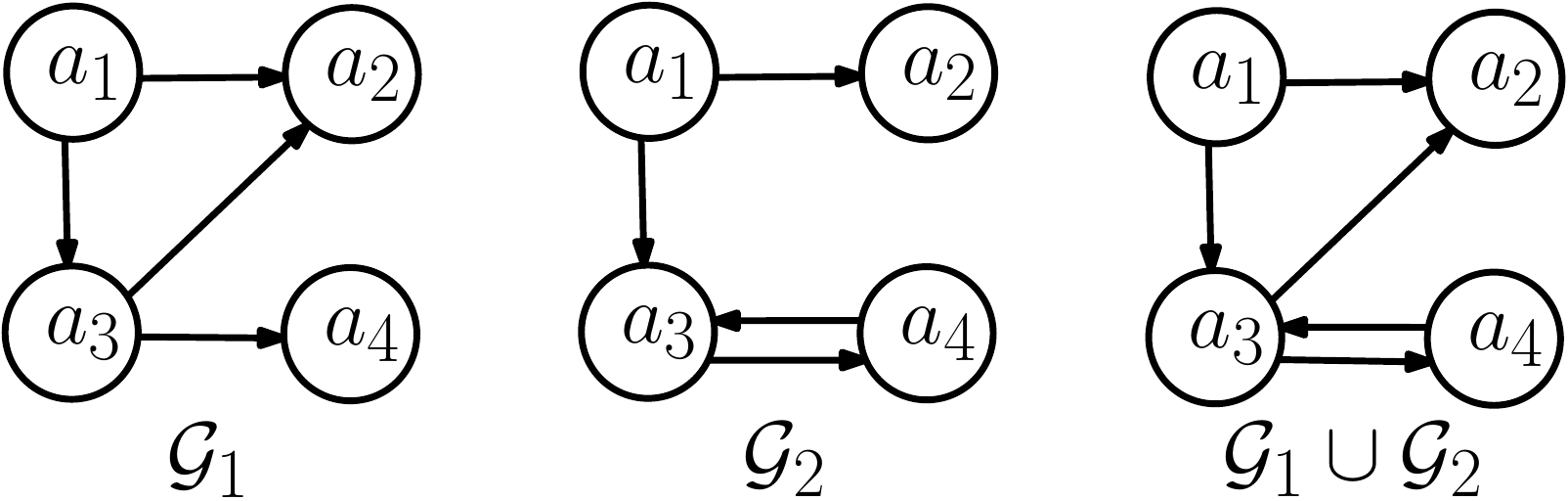}\end{center}

For this example, let $A^*_{goal}=\{a_2\}$.  Almost sure convergence to $a_2$ is not guaranteed for any initial $a(0) \neq a_2$ if the system always stays in either of the modes. Under $P_1$, there is a possibility of converging to $a_4$. Under $P_2$, there is a possibility of converging to the communicating class $\{a_3,a_4\}$ and persistently moving between those two states. Note that there is a feasible path from any state to $a_2$ on the union graph. Hence, in light of Theorem \ref{conv2a}, there exists a switching policy that achieves almost sure convergence to the common absorbing state $a_2$. For example, it can be shown that the following switching policy, which satisfies \eqref{com-pot}, achieves such a convergence guarantee:
\begin{equation}
    \label{polex1}
\sigma(a_1)=1, \; \sigma(a_2)=1, \; \sigma(a_3)=1, \; \sigma(a_4)=2.
\end{equation}
\end{example}

\begin{remark}
Given a set of modes $\pi=\{P_1, \hdots, P_k\}$ and a switching policy ${\sigma: A \mapsto \{1,2,\hdots,k\}}$ that achieves almost sure convergence to  $A^*_{goal}$ (e.g., any policy satisfying \eqref{com-pot}), one important quantity is the expected time to reach $A^*_{goal}$ from any initial state. Once such a switching  policy $\sigma$ is fixed, the system can be treated as a time-invariant absorbing chain with a transition matrix $Q_\sigma$ whose entries are
\begin{equation}
    [Q_\sigma]_{ij}=[P_{\sigma(a_i)}]_{ij}.
\end{equation}
 Accordingly, for any initial state, the expected time to reach $A^*_{goal}$ under $\sigma$ can be computed by applying the methods for time-invariant absorbing chains (e.g., see \cite[Ch. 11.2]{grinstead2012introduction}). 
\end{remark}

\subsection{Stability Under Arbitrary Switching}
In the second part of our analysis, we focus on investigating when the system almost surely converges to an absorbing state from any initial condition under any switching signal. We first show that such a convergence guarantee is possible only if all the modes have the same set of absorbing states. 
 \begin{lem}
\label{conv-nec} Let $\pi=\{P_1, \hdots, P_k\}$ be transition matrices over a finite state space $A$ with the sets of absorbing states $A^*_1, \hdots, A^*_k$. If $a(t)$ almost surely converges to an absorbing state in $A^*_\cup$ from any initial state under any switching signal $\sigma(t)$, then $A^*_1= \hdots= A^*_k$.
\end{lem}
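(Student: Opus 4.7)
The plan is to prove the contrapositive: if $A^*_i \neq A^*_j$ for some pair of modes $i, j$, then there exist an initial state and a switching signal under which $a(t)$ fails to converge almost surely to any state in $A^*_\cup$. To start, I would pick a witness: without loss of generality there is $a^* \in A^*_i \setminus A^*_j$, so $[P_i]_{a^*a^*} = 1$ while $[P_j]_{a^*a^*} < 1$; thus $a^*$ is absorbing in mode $i$ but has positive escape probability in mode $j$.

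The first technical step is a ``fixed-point'' observation: if a switching signal $\sigma(t)$ uses mode $m$ at infinitely many times, then any almost-sure limit $a^{**}$ of $a(t)$ must belong to $A^*_m$. The reason is that once $a(t) = a^{**}$, each subsequent use of mode $m$ contributes a factor $[P_m]_{a^{**}a^{**}}$ to the probability of remaining at $a^{**}$; the infinite product of these factors vanishes unless $[P_m]_{a^{**}a^{**}} = 1$, i.e., $a^{**} \in A^*_m$. As a consequence, for any signal starting at $a(0) = a^*$ that uses mode $j$ infinitely often, the a.s.\ limit must lie in $A^*_j$, and in particular cannot equal $a^*$.

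The second step is to exhibit an explicit signal under which convergence fails. The natural first candidate is the constant signal $\sigma(t)\equiv j$ with $a(0)=a^*$: the chain is then the time-homogeneous Markov chain with transition matrix $P_j$, and by the fixed-point observation any a.s.\ limit lies in $A^*_j$, which excludes $a^*$. To complete the argument I would exploit the asymmetry at $a^*$ to rule out convergence to \emph{any} point of $A^*_j$ with the desired probability: by combining mode $j$ with well-placed uses of mode $i$ (under which $a^*$ is absorbing and therefore ``frozen''), one can engineer a signal whose dynamics revisit $a^*$ infinitely often with positive probability, which, together with the fixed-point observation, forces non-convergence. The main obstacle is precisely this last step: since $\sigma(t)$ is open-loop and cannot adapt to the random state, the schedule of mode switches must be chosen so that the probability of infinitely many returns to $a^*$ stays strictly positive. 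A careful Borel--Cantelli-type estimate, leveraging the freezing effect of mode $i$ at $a^*$ to amplify the return probability, should close this gap and thereby contradict the convergence hypothesis.
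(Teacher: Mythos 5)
Your first two steps shadow the paper's own argument: the paper likewise begins by asserting that a limit state that is robust to arbitrary switching must be a common absorbing state, and then considers the constant signal $\sigma(t)\equiv i$ started at some $a^*\in A^*_i\setminus A^*_j$, which freezes the chain at $a^*\notin A^*_\cap$ and declares a contradiction. You are more careful than the paper at exactly the right place: you notice that freezing the chain at $a^*$ (or running constant mode $j$) does not by itself contradict the hypothesis, because the hypothesis only demands almost sure convergence to \emph{some} state of $A^*_\cup$, and under the constant-$i$ signal the chain does converge --- to $a^*\in A^*_\cup$. You then propose to close the gap by designing an open-loop signal under which the chain revisits $a^*$ infinitely often with positive probability, so that the fixed-point observation rules out convergence.

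That final step is a genuine gap, and it cannot be closed: once the chain leaves $a^*$ it may be trapped in a state of $A^*_\cap$ from which no mode permits any transition, so no schedule of modes, adaptive or not, produces even a single return to $a^*$. Concretely, take $A=\{a_1,a_2\}$ with $P_1$ the identity matrix and $P_2$ defined by $[P_2]_{11}=1$, $[P_2]_{21}=[P_2]_{22}=1/2$. Then $A^*_1=\{a_1,a_2\}\neq\{a_1\}=A^*_2$, yet every sample path under every switching signal is eventually constant (equal to $a_1$ or $a_2$, both in $A^*_\cup$), so \eqref{asc} holds from every initial state under every signal. Hence there is no signal exhibiting the failure your contrapositive needs; the obstacle you flagged is not a missing estimate but a fundamental one, and it reveals that the conclusion follows from the hypothesis only under a stronger reading (e.g., requiring almost sure convergence to $A^*_\cap$, or requiring each attainable limit state to be attainable robustly under all signals), which is what the paper's one-line ``contradiction'' implicitly assumes. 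As written, your proposal is incomplete at precisely the step you identified, and no Borel--Cantelli refinement of the return-probability argument will repair it.
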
 
\begin{proof}  Convergence to some $a^*_i\in A^*_\cup$ under any arbitrary switching signal requires $a^*_i$ to be a common absorbing state, i.e., $a^*_i \in A_\cap^*$, as otherwise by switching to some mode $j$ such that $a^*_i \notin A^*_j$ the system would eventually leave $a^*_i$. Now, suppose that one of the modes, $i$, has an absorbing state that some other mode, $j$, does not have, i.e., $a_i^* \in A_i^*\setminus A^*_j$. In that case, the system never reaches a common absorbing state in $A_\cap^*$ from $a_i^*$ when the system is always kept in mode $i$, resulting in a contradiction. Hence, almost sure convergence to an absorbing state from any initial state under any switching signal is possible only if $A_1^*=\hdots =A_k^*$.
 \end{proof}

In light of Lemma \ref{conv-nec}, in the remainder of this section we will only consider systems where all the modes have the same set of absorbing states. We will provide sufficient conditions for almost sure convergence under any switching among the modes. Note that such a global convergence guarantee is a very strong property and may not hold even when all the modes are absorbing to the same set of states, i.e., always staying in any single mode would ensure almost sure convergence. We provide such an example below.

 \begin{example}
Consider a system with two modes:
\begin{equation}
    \label{modex2}
P_1=\begin{bmatrix} \times & \times & 0 & 0 \\ 0 & \times & \times & 0 \\ 0 & 0 & \times & \times \\0 & 0 & 0 & 1 \end{bmatrix}, \; P_2=\begin{bmatrix} \times & 0 & \times & 0 \\ 0 & \times & 0 & \times \\ 0 & \times & \times & 0 \\0 & 0 & 0 & 1 \end{bmatrix}.
\end{equation}
The simplified transition graphs $\mathcal{G}_1$ and $\mathcal{G}_2$, are as follows:
\vskip2ex
\begin{center} \includegraphics[trim =0mm 0mm 0mm 0mm, clip,scale=0.55]{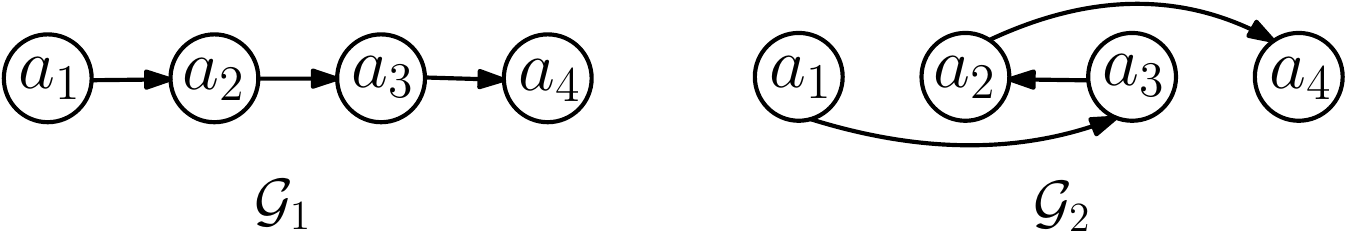}\end{center}

Note that $A^*_1=A^*_2= \{a_4\}$. Furthermore, for any $a(0)\in A$, the state almost surely convergences to $a_4$ if the system never switches and always stays in either of these modes. However, it can be shown that the system would get stuck in a cycle over the states $a_2$ and $a_3$ and never reach $a_4$ from any $a(0) \neq a_4$ under the following switching policy:
\begin{equation}
    \label{polex2}
\sigma(a_1)=1, \; \sigma(a_2)=1, \; \sigma(a_3)=2, \; \sigma(a_4) =2.
\end{equation}
Hence, while both modes are absorbing to the same state, the convergence is not guaranteed under every switching signal.
\end{example}

 \begin{theorem}
\label{conv1} Let $\pi=\{P_1, \hdots, P_k\}$ be transition matrices over a finite state space $A$, each of which has the same set of absorbing states $A^* \neq \emptyset$. Then $a(t)$ almost surely converges to an absorbing state under any switching signal $\sigma(t)$ if any of the following conditions is true:

\begin{enumerate}
    \item The intersection of simplified transition graphs, $\mathcal{G}_\cap$, is weakly acyclic with no sinks outside $A^*$.
    \item The union of simplified transition graphs, $\mathcal{G}_\cup$, is acyclic.
    \item All the simplified transition graphs are weakly acyclic and on each of those graphs every $a \notin A^*$ has a link to some $a'$ with a smaller maximum distance to $A^*$, i.e., 
\begin{equation}
\label{conv31b}
 \forall \mathcal{G}_i =(A,E_i), a\notin A^*, \exists (a,a')\in E_i : \bar{d}(a',A^*)< \bar{d}(a,A^*),
 \end{equation}
where $\bar{d}(a,A^*)$ denotes the maximum distance of $a$ to $A^*$ on the simplified transition graphs, i.e.,
\begin{equation}
\label{conv31}
\bar{d}(a,A^*) = \max_{i\in \{1, \hdots,k\}}d_i(a,A^*).
\end{equation} 
\end{enumerate}
\end{theorem}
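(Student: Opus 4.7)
The plan is to prove each of the three sufficient conditions separately, since they correspond to structurally different guarantees on the modes, but two common ingredients will recur. First, because there are finitely many modes and finitely many edges, the minimum positive transition probability $\delta=\min_{s,\,(i,j)\in E_s}[P_s]_{ij}>0$ is a uniform lower bound. Second, whenever I can exhibit a fixed horizon $L$ such that from every non-absorbing state, and irrespective of the mode sequence over the next $L$ steps, some deterministic path of length at most $L$ leads into $A^*$ along edges that each exist in every mode on that path, the absorption probability in those $L$ steps is at least $\delta^L$. Iterating this lower bound over consecutive blocks of length $L$ will force the probability of never being absorbed to zero.

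For Condition~1, every edge of $\mathcal{G}_\cap$ appears in every mode by the definition of intersection, so any specific path in $\mathcal{G}_\cap$ is traversed with probability at least $\delta$ per step regardless of the active mode. Weak acyclicity of $\mathcal{G}_\cap$ together with the assumption that its only sinks lie in $A^*$ furnishes, from every non-absorbing state, a path in $\mathcal{G}_\cap$ into $A^*$ of length at most $|A|-1$, which supplies the required $L$. For Condition~3, weak acyclicity of each $\mathcal{G}_i$ ensures $\bar d(a,A^*)<\infty$ everywhere (the sinks of each $\mathcal{G}_i$ must be absorbing, and hence lie in $A^*$). Hypothesis \eqref{conv31b} then guarantees that in every mode and from every non-absorbing state there is an outgoing edge to a state with strictly smaller $\bar d$; chaining at most $L=\max_{a}\bar d(a,A^*)$ such descents reaches $A^*$ with probability at least $\delta^L$, uniformly over the switching signal.

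Condition~2 calls for a different argument based on non-revisitation. Every sink of $\mathcal{G}_\cup$ has no non-self-loop outgoing edge in any $\mathcal{G}_i$, hence is a self-loop in every mode and thus belongs to every $A^*_i=A^*$. Acyclicity of $\mathcal{G}_\cup$ forces the sequence of distinct states visited by $a(t)$ to form a path in $\mathcal{G}_\cup$, so at most $|A|-1$ non-absorbing states are ever entered. At each non-absorbing state $a_i$ and under any mode $s$ one has $[P_s]_{ii}<1$, for otherwise $a_i\in A^*_s=A^*$; taking the maximum over the finite set of such $(i,s)$ yields some $q<1$, so the holding time at $a_i$ is stochastically dominated by a geometric random variable with parameter $1-q$ and is almost surely finite. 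Summing at most $|A|-1$ such finite holding times gives an almost surely finite absorption time.

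The hardest step will be justifying the iteration when $\sigma(t)$ is permitted to depend on the entire past of the chain, possibly adversarially. The saving observation is that in Conditions~1 and~3 the per-block bound $\delta^L$ is already worst-case over all mode sequences within the block, so the conditional probability of absorption in the next $L$ steps given the history is still at least $\delta^L$ whenever the chain has not yet been absorbed; a standard conditional Borel-Cantelli argument on the complementary events then drives the probability of never reaching $A^*$ to zero. For Condition~2 the analogous concern reduces to the uniformity of $q<1$ over switching signals, which again follows from maximizing over a finite set and introduces no additional difficulty.
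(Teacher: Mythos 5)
Your proposal is correct and follows essentially the same route as the paper for all three conditions: a uniform per-block absorption probability via paths in $\mathcal{G}_\cap$ for Condition 1, non-revisitation of non-absorbing states plus almost-sure departure for the acyclic union in Condition 2, and strict descent of $\bar{d}$ for Condition 3. You are somewhat more explicit than the paper about the uniformity of the bounds over history-dependent switching signals and about the geometric domination of holding times, but these are refinements of the same argument rather than a different approach.
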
 
\begin{proof} We prove the sufficiency of each condition.

\underline{Condition 1}: Since all the modes have the same set of absorbing states $A^*$ and $\mathcal{G}_\cap=\mathcal{G}_1 \cap \hdots \cap \mathcal{G}_k$ has no sinks outside $A^*$, the sinks of $\mathcal{G}_\cap$ are the nodes in $A^*$. For any $a \notin A^*$, let $p(a,a^*)$ be the shortest path on $\mathcal{G}_\cap$ from $a$ to some $a^*\in A^*$. Note that this path is feasible under any switching signal since any edge on the intersection graph $\mathcal{G}_\cap$ also exists on all $\mathcal{G}_1, \hdots, \mathcal{G}_k$. Hence, when the system is in state $a$, the probability of path $p(a,a^*)$ being taken by the system (hence reaching $A^*$) is lower bounded by 
\begin{equation}
    \label{lbtr}
    \min_{1 \leq q \leq k, (i,j) \in p(a,a^*) }[P_q]_{ij}^{|p(a,a^*)|},
\end{equation}
which is the $|p(a,a^*)|^{th}$ (number of edges along $p(a,a^*)$) power of the smallest transition probability assigned to an edge in $p(a,a^*)$ in any of the modes. Since the state space is finite, we have $|p(a,a^*)|< \infty$. Hence, the probability of taking path $p(a,a^*)$ is bounded away from zero, irrespective of the switching signal. Note that if the system does not reach $A^*$ within $|p(a,a^*)|$ time steps, then the same argument can be repeated for the new current state. Hence, the probability of never reaching  $A^*$ converges to zero as time increases. Once the system reaches some $a^*\in A^*$, it can never leave since $a^*$ is an absorbing state in all modes. 

\underline{Condition 2}: Since all the modes have the same set of absorbing states $A^*$, the system eventually leaves any state $a \notin A^*$ with probability one, irrespective of the switching signal $\sigma(t)$. This is because any $a \notin A^*$ has at least one outgoing edge (non-zero probability to leave) in every possible mode. Note that any feasible transition of the system between two states $a\neq a' \in A$ is included as an edge on the union of simplified transition graphs, ${\mathcal{G}_\cup =\mathcal{G}_1 \cup \hdots \cup \mathcal{G}_k}$. Since $\mathcal{G}_\cup$ is acyclic, the system can never go back to the same non-absorbing state once it leaves that state. Hence, the state transitions must (with probability one) eventually lead to a sink on $\mathcal{G}_\cup$.
Note that any sink on $\mathcal{G}_\cup$ must have no outgoing edges in any of $\mathcal{G}_1,\hdots,\mathcal{G}_k$. Hence, the sinks of $\mathcal{G}_\cup$ are the absorbing states, $A^*$.

\underline{Condition 3}: If all the simplified transition graphs are weakly acyclic, then each state $a \in A$ has a finite distance to $A^*$ on all those graphs. At time $t$, let the system be at some $a(t) \notin A^*$. If \eqref{conv31b} holds, then no matter what the current mode  $\sigma(t)$ is, there is a non-zero probability that the system transitions into some $a(t+1)$ such that
 \begin{equation}
     \label{conv32}
     \bar{d}(a(t+1),A^*)<\bar{d}(a(t),A^*).
 \end{equation}
 Since the same argument holds for any time $t$, we can always find a finite sequence of such transitions along which $\bar{d}(a(t))$ strictly decreases down to zero. Note that  $\bar{d}(a(t))=0$ if and only if $ a(t) \in A^*$. Accordingly, there is always a non-zero probability that the system will reach an absorbing state within a finite number of time steps. Consequently, the probability of this event never happening converges to zero as time goes to infinity. 
\end{proof}

\begin{remark}
The acyclicity requirement in the second condition of Theorem \ref{conv1} can not be relaxed to weak acyclicity. This can be seen in Example 2,  where the union of the simple transition graphs is weakly acyclic and there exists a switching policy ensuring that the absorbing state will never be reached from the other states.  
\end{remark}

\begin{remark}
The strict inequality requirement in \eqref{conv31b} can not be relaxed as $\bar{d}(a',A^*)\leq \bar{d}(a,A^*)$. This can be seen in Example 2, where the maximum distances of the nodes to the absorbing state are $\bar{d}(a_1,\{a_4\})=3$, $\bar{d}(a_2,\{a_4\})=\bar{d}(a_3,\{a_4\})=2$, $\bar{d}(a_4,\{a_4\})=0$. Both on $\mathcal{G}_1$ and $\mathcal{G}_2$, every $a \notin A^*$ has a link to some $a'$ such that $\bar{d}(a',A^*)\leq \bar{d}(a,A^*)$, and there exists a switching policy that ensures the absorbing state will never be reached from the other states.  
\end{remark}

 Next, we show that each of the three sufficient conditions in Theorem \ref{conv1} has some marginal value for the verification of absorption in time-varying Markov chains under arbitrary switching. In other words, each of these conditions is applicable to some cases that can not be solved by using the other two conditions. We show this by presenting three examples.


\begin{example}
Consider a system with two modes:
\begin{equation}
    \label{modex3}
P_1=\begin{bmatrix} 0 & \times & \times & 0 \\ 0 & \times & \times & 0 \\ \times & 0 & 0 & \times \\0 & 0 & 0 & 1 \end{bmatrix}, \; P_2=\begin{bmatrix} 0 & 1 & 0 & 0 \\ 0 & 0 & \times & \times \\ 0 & 0 & \times & \times \\0 & 0 & 0 & 1 \end{bmatrix}.
\end{equation}
Accordingly, the simplified transition graphs $\mathcal{G}_1$ and $\mathcal{G}_2$, and their intersection and union are as follows:
\vskip2ex
\begin{center} \includegraphics[trim =0mm 0mm 0mm 0mm, clip,scale=0.55]{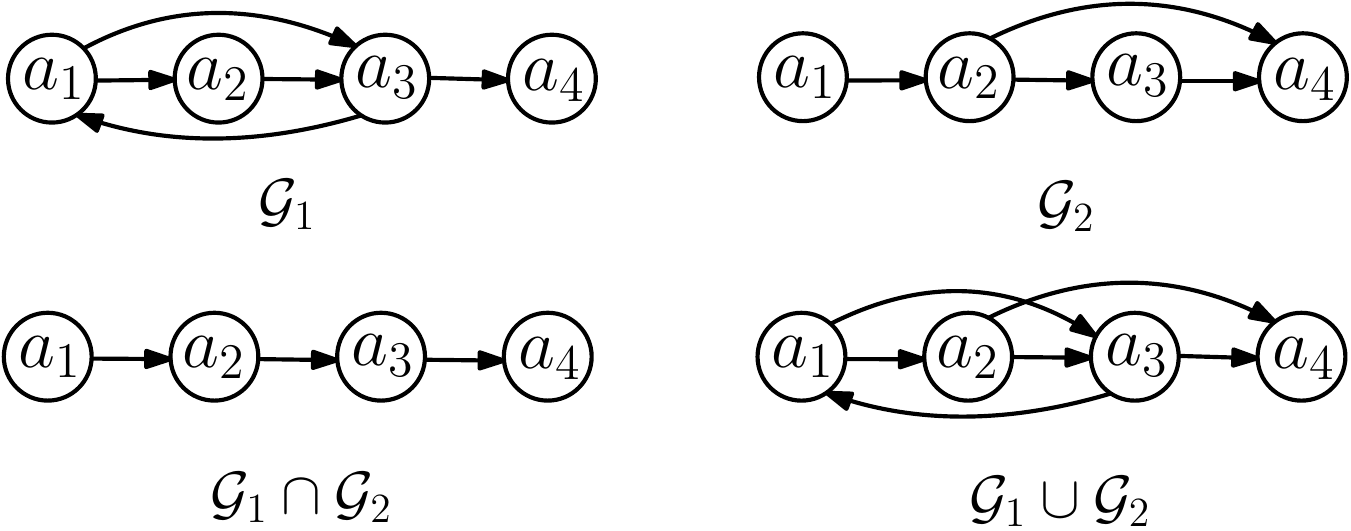}\end{center}

In this case, the second condition in Theorem \ref{conv1} is not applicable since the union graph is not acyclic (it is weakly acyclic). Furthermore, the maximum distances of nodes to the set of absorbing states, $A^*=\{a_4\}$, are as follows: $\bar{d}(a_1,A^*)=\bar{d}(a_2,A^*)=2$, $\bar{d}(a_3,A^*)=1$, $\bar{d}(a_4,A^*)=0$. On $\mathcal{G}_2$, $a_1$ has an outgoing link to only $a_2$, whose maximum distance to $a_4$ is equal to that of $a_1$. Hence, the third condition in Theorem \ref{conv1} is not applicable to this case either. However, 
in light of the first condition in Theorem \ref{conv1}, by inspecting the intersection graph, $\mathcal{G}_1 \cap \mathcal{G}_2$, we know that this system will almost surely converge to $a_4$ from any initial condition ${a(0)\in A}$, irrespective of the switching signal $\sigma(t)$. 

\end{example}

\begin{example}
Consider a system with two modes:
\begin{equation}
    \label{modex4}
P_1=\begin{bmatrix} \times & \times & 0 & 0 \\ 0 & 0 & 0 & 1 \\ 0 & 0 & \times & \times \\0 & 0 & 0 & 1 \end{bmatrix}, \; P_2=\begin{bmatrix} 0 & 0 & 1 & 0 \\ 0 & \times & \times & 0 \\ 0 & 0 & \times & \times \\0 & 0 & 0 & 1 \end{bmatrix}.
\end{equation}
Accordingly, the simplified transition graphs $\mathcal{G}_1$ and $\mathcal{G}_2$, and their intersection and union are as follows:
\vskip2ex
\begin{center} \includegraphics[trim =0mm 0mm 0mm 0mm, clip,scale=0.55]{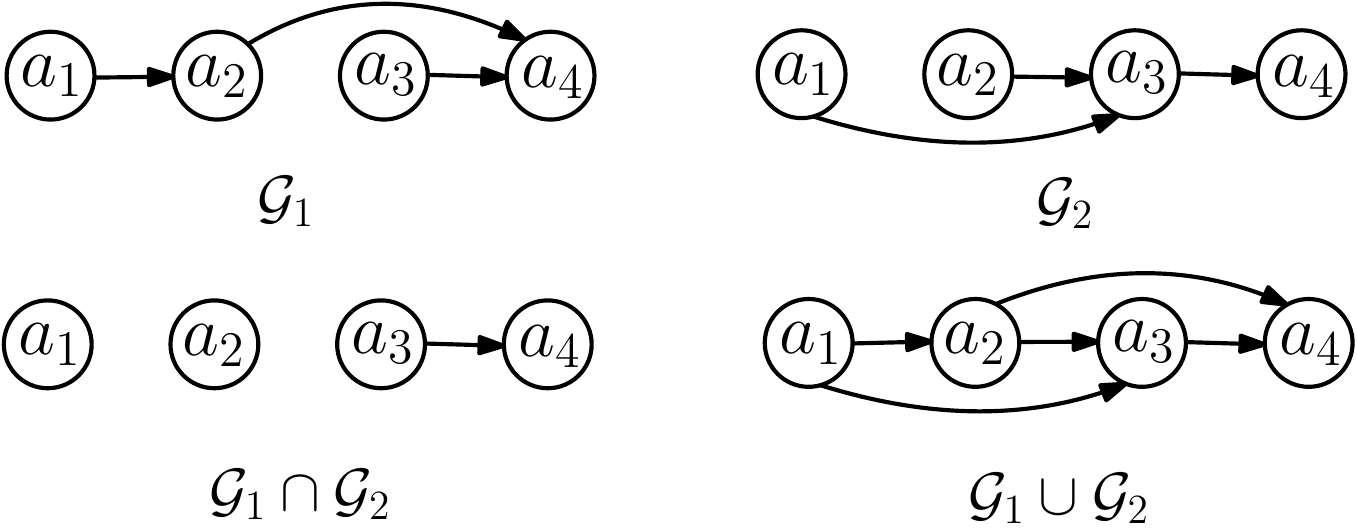}\end{center}

The first condition in Theorem \ref{conv1} is not applicable since ${a_1,a_2 \notin A^*=\{a_4\}}$ are also sinks on the intersection graph. Furthermore, the maximum distances of nodes to the set of absorbing states are as follows: $\bar{d}(a_1,A^*)=\bar{d}(a_2,A^*)=2$, $\bar{d}(a_3,A^*)=1$, $\bar{d}(a_4,A^*)=0$. On $\mathcal{G}_1$, $a_1$ has an outgoing link to only $a_2$, whose maximum distance to $a_4$ is equal to that of $a_1$. Hence, the third condition in Theorem \ref{conv1} is not applicable to this case either. However, 
based on the second condition in Theorem \ref{conv1}, we know that this system will almost surely converge to $a_4$ from any ${a(0)\in A}$, under any $\sigma(t)$ since the union graph, $\mathcal{G}_1 \cup \mathcal{G}_2$, is acyclic. 

\end{example}

\begin{example}
Consider a system with two modes:
\begin{equation}
    \label{modex5}
P_1=\begin{bmatrix} \times & \times & 0 & 0 \\ 0 & \times & \times & 0 \\ 0 & 0 & \times & \times \\0 & 0 & 0 & 1 \end{bmatrix}, \; P_2=\begin{bmatrix} \times & 0 & \times & 0 \\ \times & 0 & 0 & \times \\ 0 & 0 & \times & \times \\0 & 0 & 0 & 1 \end{bmatrix}.
\end{equation}
Accordingly, the simplified transition graphs $\mathcal{G}_1$ and $\mathcal{G}_2$, and their intersection and union are as follows:
\vskip2ex
\begin{center} \includegraphics[trim =0mm 0mm 0mm 0mm, clip,scale=0.55]{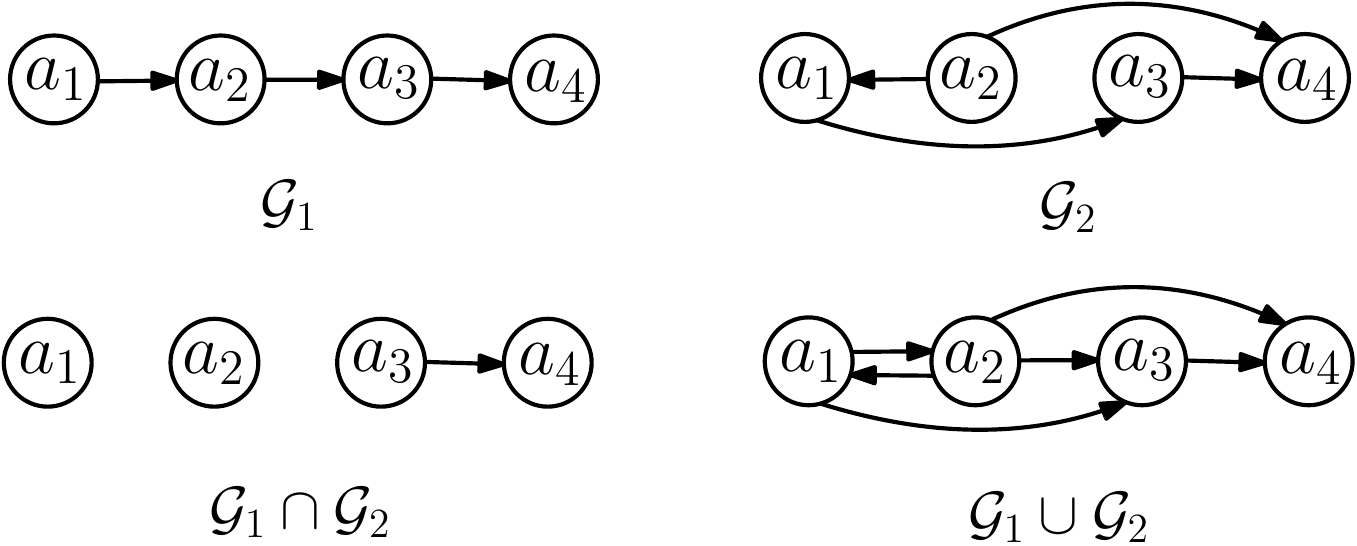}\end{center}

In this example,  the first condition in Theorem \ref{conv1} is not applicable since ${a_1,a_2 \notin A^*=\{a_4\}}$ are also sinks on the intersection graph. Furthermore, the second condition is not applicable since the union graph is not acyclic (it is weakly acyclic). Note that the maximum distances of nodes to the set of absorbing states, $A^*=\{a_4\}$, are as follows: $\bar{d}(a_1,A^*)=3$, $\bar{d}(a_2,A^*)=2$, $\bar{d}(a_3,A^*)=1$, $\bar{d}(a_4,A^*)=0$. Both on $\mathcal{G}_1$ and $\mathcal{G}_2$, each non-absorbing state has an outgoing link to some other state whose maximum distance to $A^*$ is smaller. Accordingly, in light of the third condition in Theorem \ref{conv1}, we know that this system will almost surely converge to $a_4$ from any ${a(0)\in A}$, under any switching signal $\sigma(t)$. 
\end{example}

We conclude this section with a remark regarding the application of our results to Markov chains with infinitely many modes.
\begin{remark}
The results in this paper can also be extended to Markov chains with a finite state space and an infinite set of modes, $\pi$, when the probabilities of feasible transitions are bounded away from zero, i.e., there exists $\epsilon>0$ such that for every mode $P\in \pi$, every non-zero entry ${[P]_{ij}>0}$ satisfies $[P]_{ij}\geq \epsilon$. Note that while the probabilities of feasible transitions are always bounded away from zero for a finite set of modes, this is not necessarily true when there are infinitely many modes. In such cases, the almost sure convergence arguments in the proofs of Theorems \ref{conv2a} and \ref{conv1}, which are based on the existence of feasible finite paths to absorbing states, may no longer be valid. To illustrate this, consider a system with two states $A=\{a_1,a_2\}$ and an infinite set of transition matrices $\{P_0, P_1, \hdots \}$ such that 
$$
[P_0]_{11}=0.5, \; [P_t]_{11}=\frac{\beta_t}{\beta_{t-1}}, \forall t\geq 1,  \; [P_t]_{22}=1, \forall t\geq 0,
$$
where $\beta_t = 0.25 + 0.25^{t+1}$ for all $t\geq 0$.
Accordingly, all the modes have the same simplified transition graph, which has a single edge: $a_1$ to $a_2$. For a system with a finite set of modes, any of the three conditions in Theorem \ref{conv1} would be applicable to such simplified transition graphs for showing almost sure convergence to $a_2$ under any switching signal $\sigma(t)$. However, when the system in this example starts at $a_1$, the probability of never reaching $a_2$ is 0.25 under $\sigma(t)=t$ since $$\prod_{t=0}^n [P_t]_{11}=\beta_n \text{ and } \lim_{n\to \infty}\beta_n=0.25.$$ The lack of almost sure convergence in this example arises from the fact that while ${[P_t]_{12}=1-[P_t]_{11}>0}$ for all $t\geq 0$, it approaches to zero rapidly as $t$ increases.  
\end{remark}

\section{Conclusion}
\label{conc}
We investigated the asymptotic behavior of time-varying (non-homogeneous) discrete-time Markov chains with finite state space. We particularly focused on almost sure convergence to absorbing states in systems that switch among a finite set of transition matrices (modes). We showed that a switching policy that ensures almost sure convergence to a desired set of absorbing states, $A^*_{goal}$, from any initial state exists if and only if $A^*_{goal}$ is reachable from any state on the union of simplified transition graphs. We then showed that almost sure convergence to an absorbing state from any initial condition under any switching is possible only when all the modes have the same set of absorbing states $A^*$. We provided three sufficient conditions for such stability: 1) the intersection of simplified graphs is weakly acyclic and have no sinks other than $A^*$, or 2) the union of simplified transition graphs is acyclic, or 3) in every mode, each state $a_i \notin A^*$ has a feasible transition to some state $a_j \in A$ whose maximum distance (among all simplified transition graphs) to $A^*$ is less than that of $a_i$'s. We also provided examples to show that each of these three sufficient conditions can verify stability in some cases where the other two conditions are not satisfied.


 As a future direction, we plan to explore the applications of our results to the design of provably correct learning, planning, and control algorithms for autonomous systems in stochastic and dynamic environments. One area of interest is game-theoretic learning (e.g., \cite{Yazicioglu13NECSYS,Bhat19}), where standard best-response type algorithms induce a Markov chain over the action space with the Nash equilibria being the absorbing states. We intend to use our results for studying the robustness of stability (convergence to a Nash equilibrium when the utility functions change over time), and the equilibrium selection (convergence to a specific Nash equilibrium by altering the utility functions). We are also interested in applying our results to motion planning under complex specifications represented as autamata-based temporal logics (e.g., \cite{belta2017formal,vasile2017time}) for stochastic systems in dynamic environments, where the goal is to reach an accepting state while the feasible transitions may change over time.
 


\bibliography{MyReferences}

\end{document}